\newtheorem{proposition}{Proposition}
\title{Set theory and tableaux for teaching propositional logic}
\titlerunning{Set theory and tableaux for teaching propositional logic}
\author[1]{Nino Guallart\footnote{This work has been partially supported by a grant from the V Plan Propio de la Universidad de Sevilla.}}
\author[2]{\'Angel Nepomuceno-Fern\'andez}
\affil[1]{University of Seville\\
  Seville, Spain\\
  \texttt{nguallart@us.es}}
\affil[2]{University of Seville\\
  Seville, Spain\\
  \texttt{nepomuce@us.es}}
\authorrunning{N. Guallart and A. Nepomuceno-Fern\'andez} 
\subjclass{F.4.1 Mathematical Logic}
\keywords{semantic tableaux, disjunctive normal form, set theory, propositional logic, propositional model.}
\begin{document}

\maketitle

\begin{abstract}
In this work we suggest the use of a set-theoretical interpretation of semantic tableaux for teaching propositional logic. If the student has previous notions of basic set theory, this approach to semantical tableaux can clarify her the way semantic trees operate, linking the syntactical and semantical sides of the process. Also, it may be useful for the introduction of more advanced topics in logic, like modal logic. 
 \end{abstract}

\section{Introduction}

Logic is a discipline that is studied both in sciences as in humanities, so we
need to take into account the context in which it is being taught. Our proposal is
for teaching logic to students who have completed a course about argumentation
theory or informal logic, provided certain basic notions of logic have been
studied. However, it may be suitable for students interested in other
disciplines such as computer sciences or mathematics. 

According to the cross-cutting character of logic, instead of a basic course introduced as usual, we suggest a course on propositional logic in a way that, given a propositional formal language, its semantics will be defined, not only with truth tables but also in terms of set theory. That is to say, after the presentation of semantics by means of truth tables, another presentation of semantics should be given, by means of basic notions of set theory. On the other hand, the method of tableaux introduced in (Beth, 1955) and (Smullyan, 1968) will be taken as a first form of calculus. In both cases graphical representations will be inserted, whether that are known or not (as an auxiliar tool of verification, of course after short explanations when necessary). The corresponding educational methodology allows to tackle easily several relevant topics that can be connected and becomes a general introduction to formal methods with several advantages:

\begin{enumerate}
  \item Basic set theory, whose rudiments may be known or are being studied
  simultaneously by students of computer sciences or mathematics, are put in action, which facilitates a rigorous first
  approach to mathematical logic. For students of humanities, this allows a
  gradual access to formal treatment of known issues.
  
  \item This teaching procedure makes easier the subsequent approach (and
  understanding) to semantics for first order logics and other topics within this field, like the formation of disjunctive normal form of propositional formulas.
  
  \item Likewise, it facilitates the access to propositional modal logic
  as a natural continuation of the classical one.

  \item Some applications can be understood better with tableaux method. For students of humanities calculi could be more intuitive and ulterior philosophical and metalogical problems could be addressed better. For students of mathematics  this approach may be linked to other fields, such as graph theory. For students of computer sciences tableaux can be seen as a way of working in logic next to algorithmical methods.
 
\end{enumerate}

There are some variations in the way that a semantic tableau can be developed, we will not focus on their differences. For a good starting point on the topic we recommend (D'Agostino, 1999). We will begin with the propositional language with the set theoretic semantics. Tableaux method is described and the corresponding results are settled. Then this method will be put in teaching perspective. To finish, a conluding remarks section is presented and a basic bibliography is incorporated in the paper.

\section{Classical propositional logic}

Let $L$ be a propositional language defined from a set of atoms or
propositional variables $\mathcal{P}$. The syntax must be presented to
students by means of a BNF rule:
\[ \varphi::= p \mid \neg \varphi \mid \varphi \wedge \varphi \mid
   \varphi \vee \varphi \mid \varphi \to \varphi \]
In order to define the standard semantics, let $V$ be a function from propositions to truth values $V:{\cal P}\mapsto \{0,1\}$. The evaluation of formulas is as usual. 
      
In order to settle the semantics, after presenting the standard one by means of truth tables, an introduction to basic notions of set
theory must be made. An atom is true or false with respect to a given situation or
state of the world{\footnote{From an intuitive point of view. Despite that
they were related, we are not using these terms in a specific sense of
situation theory for example.}}, according to the studied truth tables, but we should underlike the interest in the set of situations in which the atom is true, then we can consider a set of states of the
world that constitutes the framework to interpret our language $L$. A
propositional model $M$ is now defined as a set $\mathcal{U} \neq
\varnothing$, the set of states of the world, and $v_{\mathcal{U}}$ is a
function defined from $L$ to $\wp ( \mathcal{U})$.
\begin{definition}
$M = ( \mathcal{U},
v_{\mathcal{U}})$ (though we shall write $v$ instead of $v_{\mathcal{U}}$ to
abbreviate) that accomplished the following cluases:
\begin{enumerate}
  \item If $p \in \mathcal{P}$, $v (p) \in \wp ( \mathcal{U})$ ---or, what is
  the same, $v (p) \subseteq \mathcal{U}$---  
  \item $v (\neg \varphi) = \mathcal{U} \setminus \{v (\varphi)\} =
  \overline{v (\varphi)}$
  
  \item $v (\varphi \wedge \chi) = v (\varphi) \cap v (\chi)$
  
  \item $v (\varphi \vee \chi) = v (\varphi) \cup v (\chi)$
  
  \item $v (\varphi \to \chi) = \overline{v (\varphi)} \cup v (\chi)$
\end{enumerate}

\end{definition}

Given a model $M = ( \mathcal{U}, v)$ the notion of satisfaction, in a state
of the world $s$, of a formula $\varphi$ ---in symbols $M, s\models\varphi$ is given in the following ---

\begin{definition} In keeping with the grammar of $L$, 
\begin{enumerate}
  \item $M, s \models p$ if and only if{\footnote{``iff'' from now on.}} \ $s
  \in v(p)$
  
  \item $M, s \models \neg \varphi$ iff $M, s \not{\models} \varphi$
  
  \item $M, s \models \varphi \wedge \chi$ iff $M, s \models \varphi$ and $M,
  s \models \chi$
  
  \item $M, s \models \varphi \vee \chi$ iff $M, s \models \varphi$ or $M, s
  \models \chi$
  
  \item $M, s \models \varphi \to \chi$ iff $M, s \not{\models} \varphi$ or
  $M, s \models \chi$
\end{enumerate}
\end{definition}

When a formula is satisfiable in standard sense ---there is an assignement of truth values to its atoms such that the formula is true---, then we shall say that {\it the formula is standard-satisfiable}. Then there exists a situation in wich the formula is true, so a model $M$ is definible, that is to say, it is satisfiable in the sense of the semantics in terms of set theory. The relation between both forms of satiafiability is settled in the following

\begin{theorem}
A formula $\varphi\in L$ is standard-satisfiable iff it is satisfiable in the sense of the semantics in terms of set theory.
\end{theorem}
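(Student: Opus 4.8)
The plan is to prove both implications by setting up an explicit correspondence between classical truth-value assignments $V:\mathcal{P}\to\{0,1\}$ and the individual states of a set-theoretic model. The bridge is the observation that, inside any model $M=(\mathcal{U},v)$, each state $s\in\mathcal{U}$ behaves like a classical valuation: I would define the induced assignment $V_s$ by setting $V_s(p)=1$ iff $s\in v(p)$. The heart of the argument is then a single lemma, proved by induction on the structure of $\varphi$, asserting that $s\in v(\varphi)$ iff $V_s(\varphi)=1$ for every formula $\varphi\in L$. Once this lemma is available, both directions of the theorem follow almost immediately.

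First I would establish the lemma. The base case $\varphi=p$ is just the definition of $V_s$. For the inductive steps I would read each set-theoretic clause against the matching clause of the classical evaluation. For negation, $s\in v(\neg\varphi)=\overline{v(\varphi)}$ iff $s\notin v(\varphi)$, which by the induction hypothesis is equivalent to $V_s(\varphi)=0$, i.e. $V_s(\neg\varphi)=1$. Conjunction uses $s\in v(\varphi)\cap v(\chi)$ iff $s\in v(\varphi)$ and $s\in v(\chi)$; disjunction uses the union $v(\varphi)\cup v(\chi)$; and implication uses $\overline{v(\varphi)}\cup v(\chi)$, each matching the corresponding Boolean connective under the induction hypothesis. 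These cases are routine and run parallel to the satisfaction clauses verbatim, so I would not grind through all of them.

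For the backward direction, suppose $\varphi$ is satisfiable in the set-theoretic sense, so there is a model $M=(\mathcal{U},v)$ and a state $s$ with $s\in v(\varphi)$. The lemma gives $V_s(\varphi)=1$, so $V_s$ witnesses standard-satisfiability. For the forward direction, given a standard assignment $V$ with $V(\varphi)=1$, I would build the simplest possible model: take $\mathcal{U}=\{\ast\}$ a singleton, set $v(p)=\{\ast\}$ when $V(p)=1$ and $v(p)=\varnothing$ when $V(p)=0$, and extend $v$ to all of $L$ by the set-theoretic clauses. Since $V_{\ast}=V$ by construction, the lemma yields $\ast\in v(\varphi)$, so $\varphi$ is set-theoretically satisfiable in this one-state model, which is non-empty as required.

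The argument is largely bookkeeping, so rather than a genuine obstacle the point that most deserves care is the logical interface between the two directions: the set-theoretic notion of satisfiability quantifies existentially over \emph{both} a model and a state, whereas the classical notion quantifies only over an assignment, and one must verify that the single induced-valuation lemma is strong enough to service both quantifier patterns without circularity. The singleton construction keeps this clean, since it makes the induced valuation $V_{\ast}$ coincide exactly with the given $V$ rather than merely agreeing on the atoms occurring in $\varphi$.
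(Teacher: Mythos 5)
Your proposal is correct and follows the same route the paper intends: the paper's proof is only the one-line sketch ``by induction over the logical degree of $\varphi$,'' and your induced-valuation lemma ($s\in v(\varphi)$ iff $V_s(\varphi)=1$) together with the singleton-model construction is exactly the standard way to flesh out that induction. No gaps; the care you take about the quantifier mismatch between the two notions of satisfiability is a detail the paper glosses over entirely.
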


\begin{proof}
By induction over the logical degree of $\varphi$.
\end{proof}

According to Theorem 1, a wff $\varphi$ is valid iff for all propositional models $M$, and all situation $s$, $M \models \varphi$.

\begin{proposition}
By applying definition 2, the following results can be proven:
\begin{enumerate}
  \item If $M, s\models p$, then $v(p)\neq\varnothing$
  \item $M, s \models \neg \varphi$ iff $s \in \overline{v (\varphi)}$
  
  \item $M, s \models \varphi \wedge \chi$ iff $s \in v (\varphi) \cap v
  (\chi)$
  
  \item $M, s \models \varphi \vee \chi$ iff $s \in v (\varphi) \cup v (\chi)$
  
  \item $M, s \models \varphi \to \chi$ iff $s \in \overline{v (\varphi)} \cup
  v (\chi)$
\end{enumerate}
\end{proposition}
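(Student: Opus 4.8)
The plan is to reduce all five items to a single master equivalence, namely that for every $\varphi \in L$ and every state $s \in \mathcal{U}$ one has $M, s \models \varphi$ iff $s \in v(\varphi)$. Once this is in hand, each of the five clauses drops out by substituting the corresponding clause of Definition 1: for instance item 2 becomes $M, s \models \neg\varphi$ iff $s \in v(\neg\varphi) = \overline{v(\varphi)}$, item 3 becomes $M, s \models \varphi \wedge \chi$ iff $s \in v(\varphi \wedge \chi) = v(\varphi) \cap v(\chi)$, and item 5 becomes $M, s \models \varphi \to \chi$ iff $s \in v(\varphi \to \chi) = \overline{v(\varphi)} \cup v(\chi)$. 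Item 1 is then immediate, since $M, s \models p$ yields $s \in v(p)$, which already witnesses $v(p) \neq \varnothing$.

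To prove the master equivalence I would proceed by induction on the logical degree of $\varphi$, paralleling the inductive structure already used for Theorem 1. The base case is the atomic one, $\varphi = p$, which is precisely clause 1 of Definition 2. For the inductive step I would split into cases according to the principal connective, and in each case combine the matching clause of Definition 2 (the satisfaction side) with the matching clause of Definition 1 (the set-theoretic side), bridging the two with the induction hypothesis. For conjunction, say, Definition 2 gives $M, s \models \varphi \wedge \chi$ iff $M, s \models \varphi$ and $M, s \models \chi$; the hypothesis rewrites this as $s \in v(\varphi)$ and $s \in v(\chi)$, i.e.\ $s \in v(\varphi) \cap v(\chi)$; and Definition 1 identifies this intersection with $v(\varphi \wedge \chi)$. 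Disjunction and implication are handled the same way, replacing ``and'' by ``or'' and using $\cup$ (respectively complement-then-$\cup$) in place of $\cap$.

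I expect the only delicate point to be the negation case, where one must be careful that the complement is taken relative to the universe $\mathcal{U}$. Here Definition 2 gives $M, s \models \neg\varphi$ iff $M, s \not\models \varphi$, and the hypothesis turns the latter into $s \notin v(\varphi)$. To conclude $s \in \overline{v(\varphi)}$ I would invoke the standing assumption $s \in \mathcal{U}$, so that $s \notin v(\varphi)$ becomes equivalent to $s \in \mathcal{U} \setminus v(\varphi) = \overline{v(\varphi)}$, which Definition 1 identifies with $v(\neg\varphi)$. This reliance on $s$ belonging to the ambient universe is the one spot where the \emph{relative} (rather than absolute) complement genuinely matters; everywhere else the argument is a routine unfolding of the two definitions.
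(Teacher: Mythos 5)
Your proof is correct and is essentially the argument the paper intends: the paper offers no worked proof, merely asserting that the results follow ``by applying definition 2,'' and your master equivalence $M,s\models\varphi$ iff $s\in v(\varphi)$, established by induction on the logical degree exactly as in the paper's Theorem~1, is the standard way to make that assertion precise. Your attention to the relative complement in the negation case is well placed (and incidentally corrects the slip $\mathcal{U}\setminus\{v(\varphi)\}$ in Definition~1, which should read $\mathcal{U}\setminus v(\varphi)$).
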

Specific exercises must be proposed to students to practice theses
notions. Example: Is $p \wedge q \to r$ satisfied in a model $M = (
\mathcal{U}, v)$ and the state $s$ provided $(v (p) \cap v (q)) \subseteq v
(r)$? Let us see both cases:
\begin{enumerate}
  \item $s \in v (r)$, then $s \in ( \overline{v (p) \cap v (q)}) \cup v (r)$
  
  \item $s \not\in v (r)$, then $s \in \overline{v (r)} \subseteq \overline{v (p)
  \cap v (q)}$ so that $s \in ( \overline{v (p) \cap v (q)}) \cup v (r)$
\end{enumerate}
then, whatever the case may be $M, s \models p \wedge q \to r$.

If we represent subsets of states with Venn diagrams, it is very easy to explain graphically the nature of logical operators. Each circle represents the set of states of the world that make true a certain sentence, $\varphi$ or $\chi$ in this case. In each diagram the whole picture is the universe of states of the world $\mathcal{U}$ and the grey area is the set of states that make true a certain well-formed formula (wff from now on):

%
%
%
%
%

\begin{center}
\begin{tabular}{cc}

\begin{tikzpicture}[fill=gray,scale=0.75] 
\scope \clip (-2,-2) rectangle (3,2)       ; 
\fill (-2,-2) rectangle (3,2);
\fill [color=white] (0,0) circle(1); \endscope;
\draw (2.7,1.7) node {$\mathcal{U}$};
\draw (0,0) circle (1) (0.2,0.2)  node [text=black,above] {$\varphi$}   (-2,-2) rectangle (3,2) node [text=black,above] {$$}; 
\end{tikzpicture} 

 & 
\begin{tikzpicture}[fill=gray,scale=0.75] 
\scope \clip (-2,-2)   (0,0) circle (1); 
\fill (1,0) circle (1); \endscope
\draw (2.7,1.7) node {$\mathcal{U}$};
\draw (0,0) circle (1) (0,1)  node [text=black,above] {$\varphi$}       (1,0) circle (1) (1,1)  node [text=black,above] {$\chi$}       (-2,-2) rectangle (3,2) node [text=black,above] {$$}; 
\end{tikzpicture} 

\tabularnewline
$v (\neg \varphi) = \mathcal{U} \setminus \{v (\varphi)\} = \overline{v (\varphi)}$
&
$v (\varphi \land \chi) = v (\varphi) \cap v (\chi)$

\end{tabular}
\end{center}

\begin{center}
\begin{tabular}{cc}

\begin{tikzpicture}[fill=gray,scale=0.75] 
\draw (2.7,1.7) node {$\mathcal{U}$};
\fill (0,0) circle (1);
\fill (1,0) circle (1); 
\draw (0,0) circle (1) (0,1)  node [text=black,above] {$\varphi$}       (1,0) circle (1) (1,1)  node [text=black,above] {$\chi$}       (-2,-2) rectangle (3,2) node [text=black,above] {$$}; 
\end{tikzpicture} 

 & 

\begin{tikzpicture}[fill=gray,scale=0.75] 
\fill (-2,-2) rectangle (3,2);
\fill [fill=white](0,0) circle (1);
\draw (2.7,1.7) node {$\mathcal{U}$};
\fill  (1,0) circle (1); 
\draw (0,0) circle (1) (0,1)  node [text=black,above] {$\varphi$}       (1,0) circle (1) (1,1)  node [text=black,above] {$\chi$}       (-2,-2) rectangle (3,2) node [text=black,above] {$$}; 
\end{tikzpicture} 
\tabularnewline
$v (\varphi \vee \chi) = v (\varphi) \cup v (\chi)$
&
$v (\varphi \to \chi) = \overline{v (\varphi)} \cap v (\chi)$
\end{tabular}
\par\end{center}

\section{Basic theory for semantic tableaux method}

{In this section we are going to introduce basic definitions
and theorems for the use of semantic tableaux. Since the aim of this paper is
pedagogical, not theoretical, this section has a simple purpose, namely to clarify and
justify the methods that will be applied as it is shown in the next section.
We will focus on adopting semantic tableaux method for propositional classical logic. Given a finite set of formulas $\Gamma$, $T (\Gamma)$ denotes the semantic tableau,
which is a sequence of sequences of formulas called branches. Each branch is
obtained by means of application of rules to formulas that are not literals
until the branch is closed (when two complementary literals occur into it) or
there is no complex formula without applicating the corresponding rule. A
classification of complex formulas:

\begin{center}
  {\vspace{0.5cm}}\begin{tabular}{|c|c|c|}
    \hline
    $\alpha$ & $\alpha_1$ & $\alpha_2$\\
    \hline
    $\varphi \wedge \chi$ & $\varphi$ & $\chi$\\
    \hline
    $\neg (\varphi \vee \chi)$ & $\neg \varphi$ & $\neg \chi$\\
    \hline
    $\neg (\varphi \to \chi)$ & $\varphi$ & $\neg \chi$\\
    \hline
  \end{tabular} {\hspace{0.5cm}} \ \begin{tabular}{|c|c|c|}
    \hline
    $\beta$ & $\beta_1$ & $\beta_2$\\
    \hline
    $\varphi \vee \chi$ & $\varphi$ & $\chi$\\
    \hline
    $\neg (\varphi \wedge \chi)$ & $\neg \varphi$ & $\neg \chi$\\
    \hline
    $\varphi \to \chi$ & $\neg \varphi$ & $\chi$\\
    \hline
  \end{tabular} {\vspace{0.5cm}}
\end{center}

There are three kinds of rules:
\[ \frac{\neg \neg \varphi}{\varphi} ; \hspace{1em} \frac{\alpha}{\alpha_1 ;
   \alpha_2} ; \hspace{1em} \frac{\beta}{\beta_1 \mid \beta_2} \]

The following theorems
establish 
the way the
algorithm for solving
tableaux works, giving
a finite tableay for a finite
set of formulas which is satisfiable if it is open, and the
relationship between
this method and the
formation of disjunctive
normal forms.

The fundamental property of the tableaux is given by the following theorem:

\begin{theorem}
  A finite set of formulas $\Gamma$ is satisfiable iff $T (\Gamma)$ is open.
\end{theorem}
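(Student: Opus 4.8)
The plan is to prove the two directions separately, using satisfiability-preservation for the forward implication and a Hintikka-style model extraction for the converse.

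First I would establish that if $\Gamma$ is satisfiable then $T(\Gamma)$ is open, by showing that each tableau rule preserves satisfiability along at least one branch. Concretely, using Definition 2 together with the $\alpha/\beta$ classification, I would verify that: if a model $M$ and state $s$ satisfy $\neg\neg\varphi$ then they satisfy $\varphi$; if they satisfy an $\alpha$-formula then they satisfy both $\alpha_1$ and $\alpha_2$; and if they satisfy a $\beta$-formula then they satisfy at least one of $\beta_1,\beta_2$. Assuming $M,s \models \psi$ for every $\psi \in \Gamma$, these facts let me trace a single branch of $T(\Gamma)$ all of whose formulas are satisfied by $M,s$, following at each $\beta$-split the disjunct that $M,s$ makes true. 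Such a branch cannot be closed, since a closed branch contains complementary literals $p$ and $\neg p$, and no state can lie in both $v(p)$ and $\overline{v(p)}$. Hence at least one branch is open and $T(\Gamma)$ is open.

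For the converse I would exploit the finiteness of $T(\Gamma)$ for a finite $\Gamma$: the construction terminates, so every branch is completed, meaning no rule remains applicable to any non-literal on it. I pick an open completed branch $B$ and show it is a Hintikka set: it contains no complementary pair of literals, and it is downward saturated, so that whenever $\alpha \in B$ both $\alpha_1,\alpha_2 \in B$, whenever $\beta \in B$ at least one of $\beta_1,\beta_2 \in B$, and whenever $\neg\neg\varphi \in B$ then $\varphi \in B$. From $B$ I define a model by taking a single state $s$, setting $\mathcal{U}=\{s\}$, and putting $v(p)=\{s\}$ when the literal $p$ occurs in $B$ and $v(p)=\varnothing$ otherwise; this is well defined precisely because $B$ contains no complementary literals.

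The core of this direction, and the step I expect to be the main obstacle, is the Hintikka lemma: by induction on the logical degree of $\varphi$ I must show that every formula occurring in $B$ is satisfied by $M,s$. The literal base cases follow from the definition of $v$, while the inductive cases use downward saturation together with the satisfaction clauses of Definition 2; for instance, an $\alpha$-formula is satisfied because both its components lie in $B$ and are satisfied by the induction hypothesis, and a $\beta$-formula is satisfied because the component that lies in $B$ is. Care is needed to align each connective with its correct $\alpha$ or $\beta$ entry in the classification table and to treat negated compound formulas uniformly, whether they appear as $\alpha$- or $\beta$-formulas or are reduced by the double-negation rule. Since $\Gamma \subseteq B$, it follows that $M,s \models \psi$ for every $\psi \in \Gamma$, so $\Gamma$ is satisfiable; one may appeal to Theorem 1 if the extracted model is preferred in standard rather than set-theoretic terms.
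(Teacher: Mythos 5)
Your proposal is correct, and it is worth noting that the paper itself does not actually prove this theorem: its ``proof'' consists of the single remark that the result is easy and can be found in any text on semantic tableaux. What you have written is precisely the standard argument that the paper is gesturing at --- soundness by showing that each rule ($\neg\neg$, $\alpha$, $\beta$) preserves satisfiability along at least one branch, so a satisfying state can be threaded down to an open branch; completeness by taking an open completed branch, observing it is a downward-saturated set with no complementary literals (a Hintikka set), extracting a one-state model $\mathcal{U}=\{s\}$ with $v(p)=\{s\}$ iff $p$ occurs on the branch, and proving by induction on logical degree that every formula on the branch, hence every formula of $\Gamma$, is satisfied there. Two small points to make explicit if you write this up in full: the forward direction implicitly uses the fact that the tableau construction terminates (the paper's Theorem 3), so that the satisfied branch you trace really is a completed branch of $T(\Gamma)$ rather than an unfinished prefix; and your one-point model is exactly the ``minimal universe of states'' that the paper later extracts informally from open branches in Section 4, so your completeness argument also justifies that pedagogical device. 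In short, you have supplied the proof the paper omits, and it is the same proof the cited literature (Smullyan, D'Agostino) gives.
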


\begin{proof}
  The proof is very easy and can be found in any text related to semantic tableaux. Here we take it for granted, given the practical nature of this work.
\end{proof}

Note that for a finite set of formulas $\Gamma$, its tableau is equivalent to the tableau of a single formula, the conjunction of them.

As a corollary, $\Gamma$ is contradictory (not satisfiable) iff $T (\Gamma)$
is closed. On the other hand, let $\Gamma$ be a finite set of formulas and
$\varphi$ a formula such that the last is logical consequence of the former,
that is to say $\Gamma \models \varphi$, then semantic tableaux method can be applied to study that. Of course, $\Gamma \models \varphi$ iff the set $\Gamma \cup
\{\neg \varphi\}$ is not satisfiable, which is equivalent to say that $T
(\Gamma \cup \{\neg \varphi\})$ is closed.

For a valild wff $\phi$, its negation $\neg \phi$ has no possible interpretation so it is contradictory. In order to test whether a wff is valid, we must check its negation; $\phi$ is valid iff the semantic tableau of $\neg \phi$ is closed.

Semantic tableaux are usually represented graphically as rooted trees. As a tree, a tableau is a graph with no cycles, so for every two nodes there is exactly one path connecting them. We assign the formula we are analyzing to the root node, and their immediate children nodes are the subformulas obtained by any of the aforementioned rules.  Since every node can be seen as the root of a subtree, we can have a complete tableau defined recursively. A node with no branches or a leaf is a terminal node, which is the node of a literal.  

The following two theorems are crucial for understanding the practical use of semantic trees:

\begin{theorem}
The semantic tableau of a finite formula is a finite tree.
\end{theorem}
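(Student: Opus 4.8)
The plan is to prove that the semantic tableau of a finite formula is a finite tree by induction on a suitable complexity measure of formulas, showing that the tree-construction process must terminate. The key observation is that every tableau rule strictly decreases complexity: the $\alpha$-, $\beta$-, and double-negation rules each replace a formula by one or two strictly simpler formulas (its $\alpha_i$, $\beta_i$, or the formula stripped of two negations). So first I would fix a complexity measure on formulas --- the most convenient being the number of occurrences of connectives (logical degree), though one must be slightly careful because the double-negation rule removes two negation symbols at once while $\alpha$/$\beta$ rules split a binary connective into its immediate subformulas. In all three cases the measure strictly drops, so no branch can apply rules indefinitely.

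Next I would argue finiteness along two axes: depth and branching. For depth, since each node is obtained from its parent by a rule that strictly decreases the complexity of the formula being decomposed, and complexity is a natural number bounded below by zero, every path from the root terminates after finitely many steps --- it must reach a literal (a terminal node) once no complex formula remains. For branching, I would note that each rule produces at most two children (the $\alpha$-rule yields $\alpha_1;\alpha_2$ on the same branch, the $\beta$-rule splits into $\beta_1\mid\beta_2$, and the double-negation rule yields a single child), so the tree is finitely branching. Having bounded both the length of every branch and the branching factor at every node, I would invoke K\"onig's lemma --- or more simply just the elementary fact that a finitely-branching tree in which every path is finite is itself finite --- to conclude that $T(\varphi)$ has only finitely many nodes.

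The main obstacle, and the point deserving the most care, is making the termination argument airtight for the double-negation rule in conjunction with the ordering of the measure. Using raw formula length or connective count works, but one should verify that \emph{every} rule decreases the chosen measure, since a naive measure counting only binary connectives would not account for $\neg\neg\varphi \rightsquigarrow \varphi$. A clean fix is to define the degree as the total number of connective symbols (including negations), which the double-negation rule reduces by two and the $\alpha$/$\beta$ rules reduce by at least one; this uniformly guarantees a strict decrease. Once the measure is chosen correctly, the rest is routine. Since the excerpt earlier notes that a finite set $\Gamma$ has a tableau equivalent to that of the single conjunction of its members, the result for a single formula extends immediately to any finite set of formulas.
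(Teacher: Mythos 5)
Your proof is correct and takes essentially the same route as the paper's: each rule strictly decreases the complexity of the formula being decomposed, so every branch reaches a literal in finitely many steps. You are in fact more careful than the paper's one-line argument, which leaves implicit both the choice of a measure that handles the double-negation rule and the finite-branching/K\"onig's-lemma step needed to pass from ``every branch is finite'' to ``the tree is finite.''
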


\begin{proof}
A child node contains a subformula of the node obtained by the rules, so it has to be a smaller formula than the origina its predecessor. The root is a finite formula or a finite set of formulas, so this means that in every branch we will reach to a literal in a finite number of steps, and it corresponds to a leaf. 
\end{proof}

Our next step is of semantical nature and crucial in our argumentation, since we want to  link the use of semantic tableaux to a set-theoretical interpretation of formulas. We will cover this more in-depth in the next section. However, in order to understand the set-theoretical interpretation of the method, we need to know
the relationship between the interpretation of literals and the interpretation of the initial formula.

\begin{theorem}
The interpretation of a formula is a recursive function of the partial interpretations of its subformulas.
\end{theorem}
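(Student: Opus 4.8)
The plan is to prove the statement by structural induction on the formula $\varphi$, following the BNF grammar that defines $L$. The key observation is that Definition~1 already specifies the interpretation $v$ by clauses indexed by the principal connective of $\varphi$, so the task reduces to verifying that these clauses constitute a legitimate definition by recursion on the subformula relation, and that each clause exhibits $v(\varphi)$ as a fixed set-theoretic function of the interpretations of the immediate subformulas.

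First I would fix a model $M = (\mathcal{U}, v)$ and read $v$ as a function assigning to each formula a member of $\wp(\mathcal{U})$. The base case is the atoms: for $p \in \mathcal{P}$, clause~(1) of Definition~1 gives $v(p) \subseteq \mathcal{U}$ directly, with no appeal to any other value, so here the interpretation depends on no proper subformula. For the inductive step I would treat each connective in turn. Assuming as induction hypothesis that $v(\varphi)$ and $v(\chi)$ have already been determined for the immediate subformulas, clauses (2)--(5) express $v(\neg\varphi)$, $v(\varphi\wedge\chi)$, $v(\varphi\vee\chi)$ and $v(\varphi\to\chi)$ as the explicit set operations $\overline{v(\varphi)}$, $v(\varphi)\cap v(\chi)$, $v(\varphi)\cup v(\chi)$ and $\overline{v(\varphi)}\cup v(\chi)$ respectively. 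Each of these is a function of the partial interpretations alone, which is precisely the content of the theorem.

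Two points need care for the induction to be rigorous. The first is \emph{unique readability}: to know that the recursive clauses never conflict, I must appeal to the fact that every non-atomic formula has a unique principal connective and a unique decomposition into immediate subformulas, so that exactly one clause of Definition~1 applies at each step. The second is \emph{well-foundedness} of the recursion: since each immediate subformula has strictly smaller logical degree than $\varphi$ and every formula is finite, the recursion terminates after finitely many steps and bottoms out at atoms. This is the same structural fact that underlies the preceding theorem on the finiteness of the tableau, whose proof already observes that each child node is a strictly smaller formula; I would invoke it to guarantee that the subformula relation is well-founded and hence that the recursively defined $v$ is total and single-valued.

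I expect the main obstacle, such as it is, to be conceptual rather than technical: making precise what ``recursive function of the partial interpretations'' means, and isolating the two structural facts (unique readability and well-foundedness) on which Definition~1 silently relies. Once these are made explicit, the induction itself is routine, since each inductive clause is already displayed in Definition~1 and requires no further computation.
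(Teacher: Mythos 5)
Your proof is correct and takes essentially the same route as the paper: the paper's own argument is an informal sketch of exactly this recursion on the subformula structure (reading the tableau decomposition backwards and using the compositional clauses for $\cap$ and $\cup$), deferring the details to ``a more exhaustive proof.'' Your structural induction on Definition~1's clauses, together with the remarks on unique readability and well-foundedness, simply supplies those deferred details.
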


\begin{proof}
Informally, the semantic tableau cuts the formula into its constituting subformulas until their literals are obtained, so we can go the opposite direction to prove the theorem; the model of the conjunction os two subformulas is the conjunction of the models of them, and the model of the disjunction of two formulas is the disjunction of their models. A more exhaustive proof will be included into the final work.
\end{proof}

\begin{definition}
A wff is in its disjunctive normal form (DNF) if it is the disjunction of a number of conjunctive clauses, which are logical expressions formed by the conjunction of a finite number of atoms:
\end{definition}

\[ \bigvee_{i=1}^{n} \bigwedge_{j=1}^{m_{i}} a_{ij}\]

\noindent where $a_{ij}$ is an atom, that is, either a literal or the negation of a literal.

Every wff is equivalent to a DNF, and the transformation can be given in a finite number of steps, via a set of rules based on De Morgan's laws, elimination of double negation, associativity and the distributive laws of one operator over the other, having in mind that $p\rightarrow q$ is equivalent to $\neg p \lor q$ and to $\neg (p\rightarrow q)$ is equivalent to $p \land \neg q$. The DNF of a wff is not unique, since there can be equivalent DNF's of a formula \footnote {A single atom is also a clause which just one atom, and therefore expressions such as $p\lor (q\land r)$ are also DNF's. {\it Complete disjunctive normal forms} are wff's in its disjunctive normal form in which all clauses contain all the literals in the wff, either positive or negative. A wff have a unique complete disjunctive normal form, but it may be equivalent to several non-complete DNF's. For example, both $\neg p \lor q$ and $(\neg p \land q)$ are equivalent to $(\neg p \land \neg q) \lor (p \land q) \lor (\neg p \land q)$. }. (See  (Quine, 1952) for more details about normal forms and the reduction of formulas to them).

\begin{remark}
{\it Semantic tableaux and disjuctive normal forms}. The method of semantic tableaux for a wff is akin to the conversion of this formula to its DNF, since the $\alpha$ and $\beta$ rules are equivalent to the rules for the formation of the DNF of a wff. There are some methodological differences, for example that the semantic tableaux gives eventually a set of subformulas, not an explicit disjunction of them, but implicitly the semantic tableau can be seen as a method for getting the DNF of a formula via the obtention of a set of satisfiable subformulas. In this way, the teaching of semantic tableaux is also a graphical method for the obtention of DNF's, maybe clearer for the students than other methods based on purely syntactical rules.

We have established that the process of the tableau is finite and, when the original formula is reduced to a set of disjunctive subformulas that are satisfiable under the same interpretation, that set satisfies the wff. Each branch of the tree gives the conjunction of some atoms, that is, of some literals of the wff and, if the branch is open, it has a valid model that satisfies it. If there is a model that satisfies a subformula, it also satisfies partially the original formula. Since this model is a function from the models of its subformulas, the semantic tableau is basically a method for obtaining its whole model; the whole tree is comprised of branches, and the disjunction of the open branches gives the smallest satisfiable model for the wff. 

\end{remark}
\section{Semantic perspective in teaching}
Until now, we have seen the theoretical preliminaries of a set-theoretical interpretation of semantic tableaux. Now we are going to show how it works in practical terms, and the usefulness of this method for teaching logic.
We are going to sketch the advantages (and some disadvantages) of set theoretical interpretation via semantic tableaux, focusing on a comparison between semantic tableaux and truth tables for determining some properties of formulas, more exactly satisfiability and validity.

\subsection{Satisfiability}
Semantic tableaux are an easy and quick method for determining the satisfiability of formulas, and their set-theoretical interpretation can be linked to the process in a simple and illustrative way. Let's take a formula in order to check whether it is satisfiable or not; for the purposes of this exposition we will make $\phi \equiv (p\lor q)\land(\neg p\lor r)$. From the initial formula, the process of semantic tableaux gives out a set of branches. Now we are going to assign a possible state to every branch, both open or closed.
 Let's recall that a formula is valid iff it has at least a possible interpretation. Since each branch of a semantic tableau corresponds to an state of a propositional model, the tree of a satisfiable wff has at least one open branch:

\hbox to \hsize{\hfil{
\begin{tikzpicture} 
\Tree [.$\phi \equiv (p\lor q)\land(\neg p\lor r)$ [.$p\lor q$ [.$\neg p\lor r$ [.\node(p){$p$}; [.\node(np1){$\neg p$}; [.\node(x){$\times$}; ] ] [.\node(r1){$r$}; ] ] [.\node(q){$q$}; [.\node(np){$\neg p$}; ] [.\node(r2){$r$}; ] ] ] ] ] ]
\draw [color=gray,rotate=105]  (p) +(110:0.3) +(-0.4,0) ellipse (0.9 and 0.2);
\draw [color=gray,rotate=80]  (q) +(80:0.3) +(-0.4,0) ellipse (0.9 and 0.2);
\draw [color=gray,rotate=105]  (q) +(110:0.3) +(-0.4,0) ellipse (0.9 and 0.2);

\node (1) [below of = np1,node distance=1.8cm] {1};
\node (2) [right of = 1,node distance=0.65cm] {2};
\node (3) [right of = 2,node distance=0.65cm] {3};
\node (4) [right of = 3,node distance=0.65cm] {4};

\end{tikzpicture}
}\hfil}

Observe that the steps in this tree have a set-theoretical interpretation via the rules defined the section 2:
\begin{center}
$ v((p\lor q)\land(\neg p\lor r))= v(p\lor q) \cap v(\neg p\lor r) = (v(p) \cup v(q)) \cap (v(\neg p) \cap v(r)))= $
$  (v(p)\cap \overline{v(p)}) \cup (v(p)\cap v(r)) \cup (\overline{v(p)} \cap v(r)) \cup (v(q) \cap (v(r)) = $
$ \varnothing \cup v(p\land r) \cup v(\neg p \land q) \cup v(q\land r) =$
$  v(p\land r) \cup v(\neg p \land q) \cup v(q\land r) $
\end{center}

Note that each step in this interpretation corresponds to each line of the semantic tableau, being more clearer on the latter. The set-theoretical interpretation of each branch is the set of states satisfying the formula, which is a subset of the states that satisfy the whole wff; each branch determines a state, and each open branch determines a state that satisfies the formula, so for each state $s_{i},i=2,3,4$, we have that $M,s_{i}\models \phi$. The interpretation of the wff is given by the disjunction of the sets that satisfy the branches. 

Since we have four branches, three of them open, we can assign a state to each branch, and satisfiable states are the corresponding ones to open branches, that is, all but 1. Let $S=$\{1,2,3,4\} the set of states determined by the tableau, from left to right. Clearly, $v(p)=\{2\},v(q)=\{3,4\},v(r)=\{2,4\}$, and the interpretation $v(\phi )=\{2,3,4\}=S^{*}$, that is, the union of the states of the open branches. By means of $v(p) \cap v(r)=\{2\}$, $v(q) \cap v(r)=\{3\}$ and $v(\neg p) \cap v(q)=\{4\}$ it can be seen that this model satisfies $\phi\equiv (p\lor q)\land(\neg p\lor r)$, and we can take it as a representative of the equivalence class $[S^{*}]_{\models}$ of all models that satisfy $\phi$\footnote {Actually, $S^{*}$ is the representative of the minimum model, since for any model $M$ satisfying this formula (obviously with a cardinality equal or higher to the cardinality of $S^{*}$), there is an epimorphism $M\rightarrow S^{*}$ preserving the satisfaction of the wff.}.

Graphically, we can see the same results in the form of Venn diagrams:

\hbox to \hsize{\hfil{
\begin{tikzpicture} [scale=0.6]
\begin{scope}[xshift=5cm] 
\draw (-2,-3) rectangle (3,2);
\draw (0.7,2.2) node {$(p\lor q)\land(\neg p\lor r)$};
\fill [gray] (1,0) circle(1);
\fill [white] (0,0) circle(1);
\scope \clip (-2,-2)   (0.5,-1) circle (1); 
\fill [gray] (1,0) circle (1); \endscope
\scope \clip (0,0) circle (1);
\fill [gray] (0.5,-1) circle (1); \endscope;
\draw (0,0) circle (1) (0,1)  node [text=black] {$v(p)$}       (1,0) circle (1) (1,1)  node [text=black] {$v(q)$};
\draw (0.5,-1) circle (1) node [text=black,below=0.5cm] {$v(r)$};
\end{scope} 

\begin{scope}[xshift=2cm,scale=0.5,yshift=-8cm] 
\scope \clip (0,0) circle (1);
\fill [gray] (0.5,-1) circle (1); \endscope;
\draw (0,0) circle (1) (0,1)  node [text=black] {};
\draw (1,0) circle (1);
\draw (0.5,-1) circle (1) node (a) [text=black,below=0.5cm] {$v(p\land r)$};
\end{scope} 
\begin{scope}[xshift=5cm,scale=0.5,yshift=-8cm] 
\scope \clip (-2,-2)   (0.5,-1) circle (1); 
\fill [gray] (1,0) circle (1); \endscope
\draw (0,0) circle (1) (0,1)  node [text=black] {}       (1,0) circle (1) (1,1)  node [text=black] {};
\draw (0.5,-1) circle (1) node (b) [text=black,below=0.5cm] {$v(q\land r)$};
\end{scope} 
\begin{scope}[xshift=8cm,scale=0.5,yshift=-8cm] 
\fill [gray] (1,0) circle (1);
\fill [white] (0,0) circle (1);
\draw (0,0) circle (1) (0,1)  node [text=black] {}       (1,0) circle (1) (1,1)  node [text=black] {};
\draw (0.5,-1) circle (1) node (c) [text=black,below=0.5cm] {$v(\neg p \land q)$};
\draw (a) +(0,-1) node {2};
\draw (b) +(0,-1) node {3};
\draw (c) +(0,-1) node {4};

\end{scope} 
\end{tikzpicture}
}\hfil}

We can also compare this method to the traditional truth tables, showing that the results are the same from two different approaches:

\begin{center}
  {\vspace{0.5cm}}\begin{tabular}{|c|c|c|c|c|} 
    \hline
    $p$ & $q$ & $r$ & $(p\lor q)\land(\neg p\lor r)$ & States \\
     \hline
     \rowcolor{gray!50} 1 & 1 & 1 & 1 & 2,3 \\    \hline
    1 & 1 & 0 & 0 & \\    \hline
      \rowcolor{gray!50} 1 & 0 & 1 & 1 & 2 \\   \hline
    1 & 0 & 0 & 0 &  \\    \hline
     \rowcolor{gray!50} 0 & 1 & 1 & 1 & 3,4 \\   \hline
      \rowcolor{gray!50} 0 & 1 & 0 & 1 & 3  \\  \hline
    0 & 0 & 1 & 0 &  \\    \hline
    0 & 0 & 0 & 0 & \\
    \hline
  \end{tabular}
\end{center}

The truth table offers the complete disjunctive normal form of a wff, since each line that satisfies the wff corresponds to a conjunctive clause, and the whole table can be seen as the disjunction of these (first formula). We can compare it to the disjunctive normal form offered by the semantic tableau:

\vskip 1.5mm
{\raggedright $ (p\lor q)\land(\neg p\lor r) \equiv (p\land q \land r)\lor (p\land \neg q \land r)\lor (\neg p \land q\land r)\lor (\neg p\land q \land \neg r)$}

{\raggedright $ (p\lor q)\land(\neg p\lor r) \equiv (p \land r) \lor (\neg p \land q)  \lor (q \land r)$}
\vskip 1.5mm

The last column of the table show the equivalence between each truth values for the atoms and their corresponding states. For example, state 2 satisfies the first and the third line, since $\{2\}$ is a model for $(p\land q \land r)$ and $(p\land \neg q \land r)$. This comparison also shows that the use of semantic tableaux is usually faster than the construction of truth tables, being this a remarkable advantage of semantic tableaux for checking satisfiability over truth tables in computational terms.

\subsection{Validity}
If a formula $\varphi$ is valid, the tableau that begins with $\varphi$ does not give enough information to prove validity. A valid formula is also satisfiable, so the semantic tree of a valid formula offers an open tree determining a model for it but, when the formula is valid, one or several models are not enough, since all models satisfy the formula. However, we can determine the validity of $\varphi$ checking the satisfiability of its negation ($\varphi$ is valid iff the semantic tableau of $\neg \varphi$ is has all of its branches closed), and this is also
easily understandable in set-theoretical terms: if $\neg\varphi$ has no propositional model that satisfy it, all propositional models satisfy $\varphi$, so it is a valid formula.
 Therefore, the direct set-theoretical interpretation of semantic tableaux is not useful for checking the validity of wff.

Let's see an example of the cumbersomeness of the direct approach of this method. If we check a valid formula with this method, for example $(p \land q) \rightarrow (p\lor q)$, we can see that we obtain the following tree:

\begin{center}
\begin{tikzpicture}
\Tree [.$(p\land q)\rightarrow (p\lor q)$ [.$\neg (p\land q)$ [.\node(r1){$\neg p$};  ] [.$\neg q$ ]  ] [.$p\lor q$ [.$p$ ] [.$q$ ] ] ]

\node (1) [below of = r1,node distance=1cm] {1};
\node (2) [right of = 1,node distance=0.65cm] {2};
\node (3) [right of = 2,node distance=0.65cm] {3};
\node (4) [right of = 3,node distance=0.65cm] {4};

\end{tikzpicture}
\end{center}

It is obvious that the formula is satisfiable, and that all of the models of the branches are non-empty, $v(p)=\{3\}$, $v(q)=\{4\}$, $v(\neg p)=\{1\}$, $v(\neg p)=\{2\}$, and $v(\varphi )=\{1,2,3,4\}$, but it is not immediate nor evident whether the set of states that satisfy the formula equals to the whole universe of states $\mathcal{U}$, which is the condition for being a valid model. We have to test this with an ulterior method. In this case, it is not difficult to see that

\[\{1\} \cup \{2\} \cup \{3\} \cup \{4\} = v(p)\cup v(\neg p)\cup v(q)\cup v(\neg q) = \mathcal{U} \]
because we have two pair of complementary literals, $v(p)\cup v(\neg p)= \mathcal{U}$ and  $v(q)\cup v(\neg q)= \mathcal{U}$. We could have done this directly, ommiting the semantic tableau, so its use is neither useful nor practical. In sum, the semantic tableau of $\varphi$ only shows if it is satisfiable or not; if we want to check validity using semantic tableaux, we must check the satisfiability of $\neg \varphi$ and proceed as stated.

\section {Concluding remarks}
This proposal stresses semantics in terms of set theory and the use of semantic tableaux for teaching propositional logic. Both methods together, supplemented with other conceptual tools and showing their complementarieness, can make logic easier to students of any orientation and origin (humanities, computer sciences, etc.). So, for example, it is faster than truth tables, and it can be shown the relationship between a set-theoretical interpretation and a truth table.

As we have said, several advantages of using and combining these techniques can be considered. First, a basic set theory may be known for students of computer sciences or mathematics, but we propose to put it in action to reinforce its notions and appliactions. This basic theory provides a thorough way of working to students of humanities, which can lead gradual access to formal treatment of certain issues and promote the taste for the rigour of thinking. Of course, this facilitates the approach (and understanding) to first order semantics and propositional modal logic in general. Likewise, propisitonal models in terms os set theory of formulas can be obtained systematically by means of tableaux: when a formula is satisfiable, the corresponding tableau allows us to define a minimal universe of states and values for atoms.

Second, tableaux method can be taught as a set of rules of inference based on a semantic compositionality principle, an intutive method of analysis of inferences. On the other hand, for all kinds of students these methods can develop transversal logical capacities and compentences, which can help them to work other fields (as graph theory, algorithms, language analysis, etc.). In order to normalization of formulas, by means of tableaux disjunctive normal forms can be defined, then the path of learning resolution methos may be open.

To finish, we would like underlying that, though we have no statistical analysis of results, we have some experience in practicing our didactical points of view in Univesity of Seville, particularly with students of humanities, whose understanding of main logical issues has imporved during the last academic years.


\end{document}